\documentclass[10pt,twocolumn,twoside]{IEEEtran}
\IEEEoverridecommandlockouts
\usepackage{cite}
\usepackage{amsmath,amssymb,amsfonts}
\usepackage{graphicx}
\usepackage{textcomp}
\usepackage{amsthm}
\usepackage{changepage}
\usepackage{array}
\usepackage{hyperref}
\usepackage{dsfont}
\usepackage{bm}
\usepackage{caption}

\usepackage{soul}
\usepackage{color}

\usepackage{subcaption}
\usepackage{enumitem}
\usepackage{booktabs, tabularx}
\usepackage{multirow}
\usepackage{algorithm}
\usepackage{algpseudocode}
\algrenewcommand\algorithmicrequire{\textbf{Input:}}
\algrenewcommand\algorithmicensure{\textbf{Output:}}
\usepackage{float}
\usepackage{subfiles}
\usepackage{verbatim}
\usepackage{bbm}

\newtheorem{theorem}{Theorem}
\newtheorem{definition}{Definition}
\newtheorem{corollary}{Corollary}
\usepackage{cuted}
\usepackage{stfloats}
\newtheorem{proposition}{Proposition}
\newtheorem{lemma}{Lemma}
\usepackage[font = small, justification = justified,format = plain]{caption}
\DeclareMathOperator*{\argmax}{arg\,max}

\newcommand{\pip}{\pi^+}
\newcommand{\pim}{\pi^-}

\newcommand{\bp}{\pmb p}
\newcommand{\bc}{\pmb c}
\newcommand{\bd}{\pmb d}
\newcommand{\bx}{\pmb x}
\newcommand{\btau}{\pmb \tau}
\newcommand{\br}{\pmb r}

\newcommand{\R}{\mathbb{R}}
\newcommand{\clip}{\mathrm{clip}}

\theoremstyle{plain}
\makeatletter
\thm@space@setup{\thm@preskip=3pt \thm@postskip=3pt}
\makeatother

\algrenewcommand\algorithmicindent{0.5em}
\usepackage{xcolor}
\def\BibTeX{{\rm B\kern-.05em{\sc i\kern-.025em b}\kern-.08em
    T\kern-.1667em\lower.7ex\hbox{E}\kern-.125emX}}
\begin{document}

\title{Threshold Pricing for Distributed Scheduling of Flexible Demands in Energy Communities}
\author{Minjae~Jeon,~\IEEEmembership{Student Member,~IEEE,}
        Lang~Tong,~\IEEEmembership{Fellow,~IEEE,}
        Qing~Zhao,~\IEEEmembership{Fellow,~IEEE}
    \thanks{Minjae Jeon, Lang Tong, and Qing Zhao are with the School of Electrical and Computer Engineering, Cornell University, USA. This work was supported in part by the National Science Foundation under Grant 2412776, 2419622, and 2603293.}
    }

\maketitle
\begin{abstract}
This paper develops a price-based distributed scheduling in an energy community whose members own behind-the-meter renewable generation with deferrable EV charging and price-elastic thermostatic loads. A coordinator transacts with the distribution utility under a Net Energy Metering tariff and broadcasts a community price to which each household responds in its own interest, giving a bilevel stochastic dynamic program that is intractable in general. Our main result characterizes that the joint optimal centralized policy is a two-threshold policy on the community's aggregate renewable generation. Building on this structure, we adopt the Threshold Pricing Rule, which is uniform, individually rational, revenue adequate, and asymptotically optimal in terms of community welfare under a light-traffic condition. Simulations using synthetic and real world data confirm asymptotic optimality and individual surplus gains.

\end{abstract}
\begin{IEEEkeywords}
Distributed scheduling, bilevel dynamic program, deadline scheduling, energy community, net energy metering.
\end{IEEEkeywords}

\section{Introduction}
We consider price-based distributed scheduling problem in an energy community that pools households with behind-the-meter resources and transacts with the distribution utility as a single customer. The energy community is under Net Energy Metering (NEM) tariff that charges the retail rate $\pip$ on its net-import and credits the compensation rate $\pim$ on its net-export, with $\pip > \pim$. A coordinator transacts with the utility on the households' behalf and sets the prices that households face internally. The coordinator's problem is to choose those prices so that households, each acting in its own interest, collectively schedule to maximize community's social welfare.

The households we consider carry local renewable generation and two kinds of flexible demand that need to be served. EV charging is deferrable: it can be shifted across intervals but must be completed by a deadline, which couples decisions over time. A thermostatically controlled load (TCL) is non-deferrable but price-elastic: it must be served within the interval, yet its consumption level responds to price through a comfort utility that is private to the household. 

Scheduling the two jointly with local renewables is what makes the pricing problem difficult. Each household solves a stochastic dynamic program for EV charging, while choosing a comfort level in each interval; the coordinator's pricing problem sits above all $N$ of these, giving a bilevel stochastic dynamic program that is intractable in general. The coordinator's pricing must further satisfy \textit{individual rationality}---the community price should be at least as favorable as NEM, or members will abandon the community---and \textit{revenue adequacy}---the price must allow the coordinator to cover its costs. 

\subsection{Related works}
The problem of meeting demand within an energy community has been studied extensively; see \cite{Barabino23Review} for a recent survey of community models and design objectives. We consider a setting with prosumers served by a centralized local market, which appears in comparable form in \cite{Han&Morstyn&McCulloch:19TPS,	chakraborty2018analysis, alahmed2024dynamic, li2024decentralized, mignoni2023distributed}. Yet none jointly addressed scheduling of deferrable and price-elastic non-deferrable demands with stochastic renewables.

Where prices coordinate these communities, two approaches have emerged, distinguished by how the coordinator derives the price. The first is iterative: the coordinator posts tentative prices, members report their tentative responses, and the exchange repeats until the prices converge \cite{gan2013optimal,gan2013real,ma2016efficient,Ma&Gupta&Topcu:17TAC,mignoni2023distributed,davoudi2025non}. These schemes solve static optimizations over forecast horizons and require repeated communication in every scheduling interval. The second is non-iterative: the coordinator sets the price directly from the structure of the retail tariff or a centralized optimization, and members respond once \cite{chakraborty2018analysis,alahmed2024dynamic,li2024decentralized}; \cite{alahmed2024dynamic} first introduced a three-zone pricing structure, the same structure our pricing policy adopts. The present work follows this second line. Both lines lack a price that coordinates members' dynamic decision problems. Existing rules either treat non-deferrable demand as inelastic or omit deferrable demand.

The closest work to consider a price that coordinates distributed dynamic programs is Hawkins's Lagrangian decomposition of weakly coupled dynamic programs \cite{Hawkins:03Thesis}, where dual variables serve as the pricing signal, but the approach does not extend readily to the uncountable state space considered here.

\subsection{Summary of results and contributions}
This work has three main contributions. First, we characterize the optimal centralized schedule when price-elastic TCL demand and deadline-constrained EV charging are scheduled jointly. The optimal policy is a two-threshold policy on the community's aggregate renewable generation, partitioning into \textit{net-consuming, net-zero, and net-producing} zones. The thresholds and the scheduling decisions in the net-consuming and net-producing zones are given in closed form, while the net-zero zone decision is characterized by the Bellman equation. 

Second, building on this structure, we adopt the Threshold Pricing Rule (TPR), which uses the same thresholds as the centralized optimal policy and broadcasts the NEM retail rate $\pip$ in the net-consuming zone, the compensation rate $\pim$ in the net-producing zone, and the NEM tariff in the net-zero zone. Given the broadcast price, we also obtain each household's TCL and EV decisions in closed form. We show that the TPR is individually rational, revenue adequate, and asymptotically optimal as community size grows under a light traffic condition in terms of community surplus. Lastly, numerical results using synthetic and real world EV charging and residential solar data confirm the asymptotic behavior and quantify the surplus each member gains within the community.

Preliminary results on price-based distributed scheduling in energy communities with the price-inelastic non-deferrable demands appear in \cite{jeon2026optimal}. We extend the pricing rule to joint scheduling with price-elastic non-deferrable demand, derived from the optimal centralized policy of the joint scheduling problem, and present the related numerical study. Proofs of theoretical results are in the appendix of \cite{JeonTongZhao-arxiv}

\section{Problem Formulation} \label{sec:cscheduling}
We consider a finite scheduling horizon $\mathcal{T}=\{1,\dots,T\}$ and a community of $N$ households indexed by $i \in [N]$. All variables are in kilowatt-hours (kWh) \footnote{Vectors are bold-faced letters, and $\mathbbm{1}(\cdot)$ represents the indicator function.}. The models below follow the formulation in preliminary results \cite{jeon2026optimal}.


\subsection{Household demand and energy resources}

\subsubsection{Deferrable EV demand with deadline}
Each household has a single EV charger serving one EV at a time. At the beginning of interval $t$, if the charger is unoccupied, a new EV arrives with probability $\alpha_{i,t}$, independently across households. Upon arrival, the vehicle's total energy demand $D_i$ and remaining time to deadline $T_i \le T-t+1$ are drawn from the joint PMF $\mathbb P_{i,t}$, with $\mathbb P_{i,t}(0, 0) = 1 - \alpha_{i,t}$\footnote{$D_i \le T_i \bar c$ without loss of generality.}.

The EV state is $(d_{i,t}, \tau_{i,t})$, the remaining demand and intervals to deadline, with $(0,0)$ denoting an idle charger:
\begin{equation}
	(d_{i,t+1},\tau_{i,t+1}) =
	\begin{cases}
		(D_i,T_i) \sim \mathbb{P}_{i,t}(\cdot), & (d_{i,t},\tau_{i,t})=(0,0),\\
		(d_{i,t},\tau_{i,t})-(c_{i,t},1), & \text{otherwise},
	\end{cases}
	\label{eq:ev-dynamics}
\end{equation}
where $c_{i,t}\in[0,\min\{d_{i,t},\bar{c}\}]$ and $\bar{c}$ is the per-interval charging limit. Unserved demand $d$ at the deadline incurs a penalty $q(d)$, where $q$ is strictly increasing and convex with $q'(0) > \pip$. The marginal penalty condition ensures that completing charging demand is always preferred to paying the penalty.  

\subsubsection{Price-responsive TCL demand}
The non-deferrable demand of household $i$ is a TCL whose consumption $p_{i,t}\in\mathbb{R}_+$ must be served within the interval but whose level adjusts to the price. The household's preference is captured by a private utility function $U_{i,t}(p_{i,t})$, assumed differentiable, concave, and time-varying, known only to household $i$. 
\subsubsection{Renewable generation}
Household $i$'s renewable generation $r_{i,t} \ge 0$ is revealed at the beginning of interval $t$ and evolves as a discrete-time Markov process on an uncountable state space with transition kernel $f_{i,t}$:
\begin{equation}
	r_{i,t+1}\sim f_{i,t}(\cdot\mid r_{i,t}),
	\qquad
	r_t := \textstyle\sum_{i\in[N]} r_{i,t},
	\label{eq:renewable}
\end{equation}
where generation is independent across households. 

The state of household $i$ is $\pmb x_{i,t}:=(d_{i,t},\tau_{i,t},r_{i,t})$, and the community state is $\pmb x_t:=(\pmb x_{1,t},\ldots,\pmb x_{N,t})$. 

\subsection{Net Consumption and Payments}
\subsubsection{Net consumption}
The net consumption of household $i$ and of the community are
\begin{equation}
	z_{i,t} := c_{i,t}+p_{i,t}-r_{i,t},
	\qquad
	z_t := \textstyle\sum_{i\in[N]} z_{i,t}.
	\label{eq:net-consumption}
\end{equation}
Household $i$ is net-consuming if $z_{i,t}>0$ and net-producing if
$z_{i,t}<0$; the community is defined analogously on $z_t$. 

\subsubsection{Community cost under NEM}
The community operates under the utility's NEM tariff. The NEM tariff is parameterized by $\bm{\pi}:=(\pi^+,\pi^-)$ with $\pi^+>\pi^-$:
\begin{equation}\label{eq:NEM_payment}
	P_{\bm{\pi}}(z_t) = \big[\mathbbm{1}(z_t>0)\,\pi^+ + \mathbbm{1}(z_t\le0)\,\pi^-\big]\,z_t,
\end{equation}
where $\pi^+$ denotes the retail (import) rate and $\pi^-$ the compensation (export)
rate.

\subsubsection{Community pricing}
Within the community, each household's payment is determined by a uniform,
time-varying payment function applied to its own net consumption $z_{i,t}$. The coordinator sets the price parameter $\bm{\psi}_t=(\psi^+_t,\psi^-_t)$ through a pricing policy $\chi: \bm{x}_t \mapsto \bm{\psi}_t$, and household $i$ pays $P_{\pmb \psi_t}(z_{i,t})$ that has the same form as (\ref{eq:NEM_payment}) with $(\pip, \pim)$ replaced by $(\psi_t^+, \psi_t^-)$. As with the NEM tariff, a household with $z_{i,t}<0$ is credited for its net production.

\subsection{Distributed scheduling as bilevel optimization}\label{sec:bilevel}
We formulate the distributed scheduling problem as a bilevel stochastic optimization: the upper level determines the coordinator's pricing policy $\chi$, and the lower level determines each household's surplus-maximizing
response.

\subsubsection{Lower level, individual household surplus maximization}
For a fixed pricing policy $\chi$, the broadcast prices $\{\pmb \psi_t^{\chi}\}_{t=1}^T$ form a stochastic process induced by $\chi$ and the community state dynamics. Households, as price takers, treat $\{\pmb \psi_t^{\chi}\}$ as exogenous. 

Within the household, the two demands are scheduled sequentially in each interval: the TCL, being non-deferrable, is scheduled first with priority claim on the household's renewable generation; the EV charging decision is then made against the {\em residual} renewable. Given the broadcast price parameter $\pmb \psi_t^\chi $ and the realized renewable $r_{i,t}$, household $i$ sets its TCL consumption by the surplus maximization
\begin{equation}\label{eq:tcl-opt}
	p_{i,t}(\pmb \psi_t^\chi )\in\argmax_{p\,\in\,\mathbb{R}_+}\;
	U_{i,t}(p)-P_{\pmb \psi_t^\chi}(p-r_{i,t}).
\end{equation}
The TCL is scheduled as if it were the household's only load, and the EV has a claim only on what remains. 

With the TCL scheduled by \eqref{eq:tcl-opt}, the household's remaining decision is the EV charging policy $\pmb \mu_i$, chosen to maximize the expected cumulative surplus
\begin{equation}\label{eq:lower}
	\begin{aligned}
		S_i^{\chi}:=\max_{\pmb \mu_i}\;
		&\mathbb{E}\Bigg[\sum_{t=1}^{T}
		 U_{i,t}\big(p_{i,t}(\pmb \psi_t^{\chi})\big)
		- P_{\pmb \psi_t^{\chi}}(z_{i,t})\\[-2pt]
		&\qquad\qquad -\mathbbm{1}(\tau_{i,t}=1)\,q(d_{i,t}-c_{i,t})\Bigg]\\
		\text{s.t.}\;\;
		& (1)-(3), (5)\;\;
		c_{i,t}=\mu_{i,t}(\pmb x_{i,t},\pmb \psi_t^{\chi}),\;\forall t.
	\end{aligned}
\end{equation}
\subsubsection{Pricing constraints}
The coordinator's pricing policy is subject to two constraints, ensuring coalition stability and no deficit for the coordinator. 

\begin{definition}[Revenue Adequacy]\label{def:ra}
	A pricing policy $\chi$ is {\em revenue adequate} if, 
	\begin{equation}\label{eq:ra}
		P_{\pmb \pi}\Big(\sum_{i\in[N]}z_{i,t}^{\chi}\Big)
		\le\sum_{i\in[N]}P_{\pmb \psi_t^\chi }\big(z_{i,t}^{\chi}\big),\quad\forall t,
	\end{equation}
	where $z_{i,t}^{\chi}$ is the $\chi$-induced net consumption of
	household $i$.
\end{definition}

\begin{definition}[Individual Rationality]\label{def:ir}
	A pricing policy $\chi$ is {\em individually rational} if 
	\begin{equation}\label{eq:ir}
		S_i^{\chi}\ge S_i^{\text{\normalfont NEM}},\quad\forall i\in[N],
	\end{equation}
	with $S_i^{\text{\normalfont NEM}}$ defined as \eqref{eq:lower} with $P_{\pmb \psi_t^{\chi}}$ and $p_{i,t}(\pmb \psi_t^\chi)$ replaced by $P_{\pmb \pi}$ and $p_{i,t}(\pmb \pi)$. 
\end{definition}


\subsubsection{Upper level, community surplus maximization}
The community surplus in interval $t$, $W_t$, is defined as
\[
W_t(\pmb x_t, \pmb p_t, \pmb c_t) := \sum_{i\in[N]}U_{i,t}\big(p_{i,t}\big)
- P_{\pmb \pi}(z_t)-\sum_{j\in\mathcal{J}_t}q(d_{j,t}-c_{j,t}),
\]
where $\mathcal{J}_t:=\{j\in[N]\,|\,\tau_{j,t}=1\}$ is the set of EVs reaching their deadlines in interval $t$. The coordinator's problem is the total community surplus maximization:
\begin{equation}\label{eq:upper}
	\begin{aligned}
		\max_{\chi}\;\;
		&\mathbb{E}\Bigg[\sum_{t=1}^{T}
		W_t(\pmb x_t, \pmb p_t, \pmb c_t)
		\Bigg]\\
		\text{s.t.}\;\;
		&(1)-(5), (7)-(8),\\
		&c_{i,t}= \mu_{i,t}^{\chi}(\pmb x_{i,t}, \pmb \psi_t),\;\;
		p_{i,t}=p_{i,t}(\pmb \psi_t),\;\forall i,t,
	\end{aligned}
\end{equation}
where $\pmb \mu_{i}^{\chi}$ is the optimal lower-level policy of \eqref{eq:lower}. We assume the coordinator knows the deadlines and remaining demands of all plugged-in EVs and can measure each household's renewable generation.

\section{Optimal Centralized Scheduling}\label{sec:censolution}
We first consider the centralized problem, with the coordinator observing every household state. The coordinator solves the welfare maximization \eqref{eq:upper} for the scheduling policy $\pmb \nu = (\nu_1, \ldots, \nu_T)$ without pricing constraints:
\begin{equation}\label{eq:centralized}
	\begin{aligned}
		W^N(\pmb x_1):=\max_{\pmb \nu}\;
		&\;\mathbb{E}\Bigg[\sum_{t=1}^{T}W_t(\pmb x_t, \pmb p_t, \pmb c_t)\Bigg]\\
		\text{s.t.}\;\;
		&\;(1)-(4),\;\;
		(\pmb p_t,\pmb c_t)=\nu_t(\pmb x_t),\;\forall t,
	\end{aligned}
\end{equation}
a finite-horizon Markov decision process with state $\pmb x_t$ and action as $N$ charging and TCL decisions. This problem serves as an upper bound of the welfare achievable by any pricing policy. We characterize the structure of the optimal policy here. 

 
\subsection{Optimal centralized policy}\label{sec:opt-structure}
Two state-dependent quantities bracket each EV charging decision. To
meet the deadline, the EV $i$ must charge at least
\begin{equation*}\label{eq:m}
	m_{i,t}(d_{i,t},\tau_{i,t}):=\max\{d_{i,t}-(\tau_{i,t}-1)\bar c,\,0\}
\end{equation*}
in interval $t$, and the charger can deliver at
most
\begin{equation*}\label{eq:M}
	M_{i,t}(d_{i,t}):=\min\{d_{i,t},\bar c\}.
\end{equation*}

For the non-deferrable demand, writing $\partial U_{i,t}^{-1}$ for the inverse marginal utility, define
\begin{equation*}\label{eq:tcl-levels}
	p_{i,t}^+:=\partial U_{i,t}^{-1}(\pi^+),\qquad
	p_{i,t}^-:=\partial U_{i,t}^{-1}(\pi^-),
\end{equation*}
the solution of (\ref{eq:tcl-opt}) for the constant prices $\pip$ and $\pim$. Concavity of $U_{i,t}$ gives $0 < p_{i,t}^+\le p_{i,t}^-$, where positivity is assumed of $U_{i,t}$. We now formalize two-threshold structure of the optimal centralized policy. 

\begin{theorem}[Optimal centralized policy]\label{thm:centralized}
	The optimal policy $\nu^*_t: \pmb x_t\mapsto (c^*_t,p^*_t)$ of
	\eqref{eq:centralized} is a two-threshold policy on the aggregate
	renewable $r_t$, with thresholds
	$\Delta_t^+ := \sum_{i\in[N]}p^+_{i,t}+m_{i,t}(d_{i,t}, \tau_{i,t})$ and
	$\Delta_t ^- := \sum_{i\in[N]}p^-_{i,t}+M_{i,t}(d_{i,t})$, and a net-zero-zone policy
	$\pmb \rho_t(\pmb x_t)=(\rho_{1,t}(\pmb x_t),\ldots,\rho_{N,t}(\pmb x_t))$:
	\begin{equation*}\label{eq:thm}
		(p^*_{i,t},c^*_{i,t})=
		\begin{cases}
			\big(p^+_{i,t},\,m_{i,t} (d_{i,t}, \tau_{i,t})\big), &
			r_t\le \Delta_t^+,\\[3pt]
			\rho_{i,t}(\pmb x_t), &
			\Delta_t^+ < r_t \le \Delta_t^- \\
			\big(p^-_{i,t},\,M_{i,t}(d_{i,t})\big), &
			\Delta_t^- < r_t
		\end{cases}
	\end{equation*}
	where $\rho_{i,t}(\pmb x_t) = (p_{i,t}^\rho, c_{i,t}^{\rho})$ satisfies $\sum_{i\in[N]}\big( p^{\rho}_{i,t} + c^{\rho}_{i,t}\big)=r_t$.
\end{theorem}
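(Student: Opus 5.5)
We argue by backward induction on the Bellman equation of \eqref{eq:centralized}, pairing it with a per-stage marginal-value argument. Let $V_{t+1}(\pmb x_{t+1})$ be the optimal value-to-go. Since the renewable process and EV arrivals are exogenous, the charging decision $c_{i,t}$ affects the future only through the remaining demand $d_{i,t}-c_{i,t}$. The TCL decision $p_{i,t}$ affects only the current stage. The stage-$t$ problem is therefore
\begin{equation*}
\max_{\pmb p_t,\pmb c_t}\; \sum_{i}U_{i,t}(p_{i,t}) - P_{\pmb\pi}(z_t) - \sum_{j\in\mathcal J_t} q(d_{j,t}-c_{j,t}) + \mathbb{E}\big[V_{t+1}(\pmb x_{t+1})\big],
\end{equation*}
subject to $m_{i,t}\le c_{i,t}\le M_{i,t}$. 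We may impose $c_{i,t}\ge m_{i,t}$ without loss of optimality: falling short of $m_{i,t}$ forces an unserved amount at the deadline, and since $q'(0)>\pip$ that amount costs more than buying the energy now at price at most $\pip$. The first step is to establish, by induction, that $V_{t+1}$ is concave and nonincreasing in each remaining demand $d_{i,t+1}$. We also need its marginal value of reducing remaining demand to lie in $[\pim,\pip]$. Concretely, we claim $-\partial V_{t+1}/\partial d\in[\pim,\pip]$ on the region of feasible, not-yet-forced demand. The upper bound holds because one unit of demand can always be served later at cost at most $\pip$ (or is forced, which is already accounted for in $m$). The lower bound holds because serving a unit later saves at most its purchase cost, and energy is never valued below the export rate $\pim$. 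Concavity propagates because $P_{\pmb\pi}$ is convex ($\pip>\pim$), $q$ is convex, $-U_{i,t}$ is convex, and the feasible set is jointly convex in $(d,c)$. Partial maximization of a jointly concave function preserves concavity.

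With concavity in hand, the stage problem is a concave program. The only nonsmoothness comes from the kink of $P_{\pmb\pi}$ at $z_t=0$, whose subdifferential is $[\pim,\pip]$. The KKT conditions then produce three regimes.

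In the \emph{net-consuming regime} $z_t>0$, the community price is $\pip$. Each TCL equates $U_{i,t}'(p)=\pip$, giving $p^+_{i,t}$. The marginal benefit of extra charging now is $-\partial V_{t+1}/\partial d\le\pip$, so each EV charges the minimum $m_{i,t}$. This allocation is consistent (i.e., $z_t>0$) exactly when $r_t<\Delta^+_t$.

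In the \emph{net-producing regime} $z_t<0$, the price is $\pim$. TCLs take $p^-_{i,t}$, and since the marginal benefit of charging is at least $\pim$, EVs charge the maximum $M_{i,t}$. This is consistent exactly when $r_t>\Delta^-_t$.

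Otherwise $\Delta^+_t\le r_t\le\Delta^-_t$, and the optimum sits at the kink $z_t=0$ with a multiplier $\lambda_t\in[\pim,\pip]$. This yields $\sum_i(p^\rho_{i,t}+c^\rho_{i,t})=r_t$. The allocation $\pmb\rho_t$ itself is determined by the Bellman equation through $\lambda_t$ and $\partial V_{t+1}$, which is why it is not in closed form. The boundary case $r_t=\Delta_t^+$ can be assigned to the first case, since both allocations coincide there.

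The main obstacle is the induction step bounding the marginal value $-\partial V_{t+1}/\partial d$ within $[\pim,\pip]$ on an uncountable, Markov-modulated state space. To handle it, we would first argue via a coupling or perturbation: given an optimal continuation policy, shifting $\epsilon$ units of charging between periods changes cost by at most $\pip\epsilon$ and at least $\pim\epsilon$, because each stage's marginal energy price lies in $\partial P_{\pmb\pi}\subseteq[\pim,\pip]$ and $q'(0)>\pip$ rules out preferring the penalty. Only afterwards would we verify measurability and existence of maximizers, using compact action sets and upper semicontinuity.
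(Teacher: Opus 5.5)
\textbf{Where your plan matches the paper.} Your plan follows the paper's architecture closely:
\begin{itemize}
\item concavity of $V_t$ in $\bd$ and the sandwich bound $[\pim,\pip]$ on the marginal value of remaining demand, both by backward induction (the paper's Lemma~1(1)--(3));
\item the restriction $c_{i,t}\ge m_{i,t}$ from $q'(0)>\pip$ (Lemma~1(4) and Corollary~1);
\item a guess-and-verify of $v^+=(\bp_t^+,\pmb m_t)$ and $v^-=(\bp_t^-,\pmb M_t)$ in the outer zones. This is equivalent to the paper's observation that $J_t=\min\{J_t^+,J_t^-\}\le J_t^\pm$, with equality at the candidate.
\end{itemize}

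\textbf{The gap: the net-zero zone.} This step does not go through as written. Your ``Otherwise \ldots\ the optimum sits at the kink'' is an exclusion argument. It presumes that any optimum with $z_t>0$ must be $v^+$ (forcing $r_t<\Delta_t^+$), and any optimum with $z_t<0$ must be $v^-$. But your marginal bound only makes $\pmb m_t$ \emph{a} maximizer of $\Phi_t(\bx_t,\bc)-\pip\mathbf 1^T\bc$ over $\prod_i[m_{i,t},M_{i,t}]$, not the unique one. Ties are generic under a piecewise-linear tariff.

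Here is a counterexample. Take one household with no renewable after $t$, so $V_{t+1}(d)=-\pip d+\text{const}$ on feasible $d$. Let $m_{1,t}=0<M_{1,t}$ and $p^+_{1,t}<r_t<p^+_{1,t}+M_{1,t}$, which lies inside the net-zero zone. On $\{z_t\ge0\}$, $J_t$ equals $U_{1,t}(p)-\pip p+\text{const}$, independent of $c$. So $(p^+_{1,t},M_{1,t})$ is optimal with $z_t>0$.

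Off-kink optima can therefore exist in the middle zone. The theorem's claim is that \emph{some} optimum satisfies $\sum_i(p^\rho_{i,t}+c^\rho_{i,t})=r_t$, and this needs a constructive argument. Likewise, positing a multiplier $\lambda_t\in[\pim,\pip]$ at the kink is exactly what has to be proved.

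\textbf{How the paper closes it.} The paper uses an elementary convexity step that you should add.
\begin{itemize}
\item Split the feasible set into $A^+=\{y\ge r_t\}$ and $A^-=\{y\le r_t\}$, on which $J_t=J_t^+$ and $J_t=J_t^-$ respectively.
\item Suppose $\hat v$ maximizes $J_t^+$ over $A^+$ with $y(\hat v)>r_t$. Take the point $\tilde v$ on the segment $[\hat v,v^+]$ with $y(\tilde v)=r_t$; it exists because $y(v^+)=\Delta_t^+<r_t$ and $y$ is affine.
\item Since $v^+$ maximizes $J_t^+$ globally, concavity gives $J_t^+(\tilde v)\ge J_t^+(\hat v)$.
\item Argue symmetrically on $A^-$, using $y(v^-)=\Delta_t^-\ge r_t$.
\end{itemize}
Both piecewise maxima are then attained on $\{y=r_t\}$, which yields $\pmb\rho_t$.

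\textbf{Alternative via multipliers.} If you prefer to keep the multiplier language, you must do the following instead:
\begin{itemize}
\item Show that the dual function of $\max J_t$ subject to $y=r_t$ is minimized at some $\lambda_t\in[\pim,\pip]$. This follows because it has subgradient $r_t-\Delta_t^+>0$ at $\pip$ and subgradient $r_t-\Delta_t^-\le0$ at $\pim$.
\item Invoke strong duality.
\item Use $P_{\pmb\pi}(z)\ge\lambda_t z$ to conclude that the constrained maximizer is globally optimal.
\end{itemize}
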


The thresholds partition $r_t$ into a net-consuming zone below $\Delta_t^+$, a net-zero zone between the thresholds, and a net-producing zone above $\Delta_t^-$. The thresholds are themselves the aggregate consumption in the two outer zones, so the community imports the shortfall below $\Delta_t^+$, exports the surplus above $\Delta_t^-$. 

Both decisions in the outer zones follow from the price gap. Below $\Delta_t^+$, the marginal kWh is bought at $\pip$, so the TCL consumes only up to $p_{i,t}^+$ and each EV charges only what its deadline requires, deferring the rest for future renewables. Above $\Delta_t^-$, the marginal kWh is exported at $\pim$, so both loads expand: the TCL to $p_{i,t}^-$ and the EV to $M_{i,t}$, since consuming is preferable to selling at the lower rate. Note that, in the net-consuming and net-producing zones, the optimal decisions and thresholds are in closed form, with computational cost linear in $N$. The complexity is isolated to the net-zero zone, where the optimal allocation $\pmb \rho_t$ requires solving the Bellman equation, which suffers from the curse of dimensionality.

Furthermore, $(p^+_{i,t},m_{i,t})$ and $(p^-_{i,t},M_{i,t})$ are household $i$'s best responses to $\pi^+$ and $\pi^-$, respectively, so outside the net-zero zone, the centralized optimum can be induced by broadcasting the right price. Sec.~\ref{sec:tpr} builds on this fact to construct a pricing rule for distributed scheduling.

\section{Threshold Pricing Rule and Properties}\label{sec:tpr}
Rather than solving the bilevel program directly, we adopt the three-zone structure of Theorem~\ref{thm:centralized} and seek prices that induce the centralized decisions as individual best responses. Such prices exist in the two outer zones, but not in the net-zero zone, where we trade optimality for simplicity by posting the NEM tariff.

\subsection{Threshold Pricing Rule (TPR)}\label{sec:tpr-def}
At the beginning of each interval, we assume that household $i$ reports its EV state $(d_{i,t},\tau_{i,t})$ and its price-contingent TCL levels $(p^+_{i,t},p^-_{i,t})$. Utility function $U_{i,t}$ is still private. Given the reports and the measured renewables, the coordinator sets the community price by the following rule.

\begin{definition}[Threshold Pricing Rule]\label{def:tpr}
	The TPR pricing policy $\chi_{\textrm{\upshape TPR}}:\pmb x_t\mapsto \pmb \psi_t$ sets the price according to the two thresholds of the optimal centralized policy $\Delta_t^+ $ and $\Delta_t^-$:
	\begin{equation*}\label{eq:tpr}
		\pmb \psi_t=
		\begin{cases}
			\pmb\pi^+:=(\pi^+,\pi^+), &
			r_t\le \Delta_t^+,\\[2pt]
			\pmb\pi:=(\pi^+,\pi^-), &
			\Delta_t^+<r_t\le \Delta_t^-\\[-1pt]
			\pmb\pi^-:=(\pi^-,\pi^-), &
			r_t> \Delta_t^-.
		\end{cases}
	\end{equation*}
\end{definition}
TPR broadcasts the linear price $\pmb \pi^+$ in the net-consuming zone, the linear price $\pmb \pi^-$ in the net-producing zone, and the NEM tariff $\pmb \pi$ in between. Computating prices requires evaluating the two thresholds, which are the sum of $N$ reported terms. 

\subsection{Optimal household response to TPR}\label{sec:tpr-response}

The following proposition solves the household's surplus maximization \eqref{eq:lower} under TPR.

\begin{proposition}[Optimal household response]\label{prop:response}
	Under $\chi_{\textrm{\upshape TPR}}$, household $i$'s optimal decision
	is myopic, depending only on the current broadcast price:
	\begin{equation*}\label{eq:response}
		(p_{i,t},c_{i,t})=
		\begin{cases}
			\big(p^+_{i,t},\,m_{i,t} (d_{i,t}, \tau_{i,t})\big), & \pmb \psi_t=\pmb\pi^+,\\[2pt]
			\big(\tilde p_{i,t}(\pmb\pi, r_{i,t}),\,\tilde c_{i,t}(d_{i,t}, \tau_{i,t})\big), & \pmb \psi_t=\pmb\pi,\\[2pt]
			\big(p^-_{i,t},\,M_{i,t} (d_{i,t})\big), & \pmb \psi_t=\pmb\pi^-,
		\end{cases}
	\end{equation*}
	with $\tilde p_{i,t}(\pmb \pi, r_{i,t})$ and $\tilde c_{i,t}(d_{i,t}, \tau_{i,t})$ are defined as
	\begin{equation}\label{eq:tcl-nem}
		\begin{aligned}
			\tilde p_{i,t}(\pmb \pi, r_{i,t}) = 
			\min\big\{
			\max \{r_{i,t}, p_{i,t}^+\}, p_{i,t}^-
			\big\}
		\end{aligned}
	\end{equation}
	\begin{equation}\label{eq:ev-nem}
		\begin{aligned}
			\tilde c_{i,t}(d_{i,t}, \tau_{i,t})=
			\min\big\{&\max\{r_{i,t}- \tilde  p_{i,t}(\pmb\pi, r_{i,t}),\\
			&m_{i,t}(d_{i,t}, \tau_{i,t})\}, \, M_{i,t}(d_{i,t})\big\}.
		\end{aligned}
	\end{equation}
\end{proposition}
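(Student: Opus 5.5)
The approach is to show that under $\chi_{\textrm{TPR}}$ the household's dynamic program decouples across time: the proposed myopic rule maximizes each per-stage term, and it does not reduce the continuation value. The key structural fact is that every price the household can face satisfies $\psi_t^+\le\pip$ and $\psi_t^-\ge\pim$. So the marginal cost of any kWh is at most $\pip$, and the marginal value of saving a kWh for later is at least $\pim$. We handle the TCL first, since by \eqref{eq:tcl-opt} it is a static problem, and then treat the EV charging as a deadline-constrained DP in $(d_{i,t},\tau_{i,t})$ with the residual renewable $r_{i,t}-p_{i,t}$.

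\emph{TCL step.} Under a linear price $\pmb\pi^\pm$, the objective in \eqref{eq:tcl-opt} is $U_{i,t}(p)-\pi^\pm(p-r_{i,t})$. Concavity and the first-order condition give $p^\pm_{i,t}$. Under the NEM tariff $\pmb\pi$, the payment is convex piecewise linear in $p$ with a kink at $p=r_{i,t}$: its slope is $\pim$ below the kink and $\pip$ above. The subgradient optimality condition $\partial U_{i,t}(p)\in[\pim,\pip]$ at the kink, together with monotonicity of $\partial U_{i,t}$, then yields the clipping formula \eqref{eq:tcl-nem}.

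\emph{EV step.} Define the continuation value $V_{t+1}(d,\tau,r)$ of the lower-level problem \eqref{eq:lower}. I would prove by backward induction that $V_{t+1}$ is concave in $d$ and that its slope satisfies $-q'(\cdot)\le\partial_d V_{t+1}\le -\pim$ on feasible states; here $q'(0)>\pip$ makes finishing before the deadline optimal. Next I would show a stronger and more useful statement: charging one more unit now costs at most the current marginal price, while deferring it costs at least as much in any future interval. Concretely, I would use an exchange argument on sample paths rather than the Bellman slope.

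Under $\pmb\pi^+$, the per-unit stage cost of charging now is $\pip$. This is the maximum possible future marginal price, and deferral is feasible down to $m_{i,t}$, so deferring is weakly better and we charge exactly $m_{i,t}$. Under $\pmb\pi^-$, charging now costs $\pim$, the minimum possible future price, so we charge $M_{i,t}$. Under $\pmb\pi$, residual renewable $r_{i,t}-\tilde p_{i,t}$ has opportunity cost $\pim$, which is the lowest possible. Charging beyond it costs $\pip$, the highest possible. Hence we use exactly the residual renewable, clipped to $[m_{i,t},M_{i,t}]$, giving \eqref{eq:ev-nem}.

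The main obstacle is making the exchange argument rigorous when future prices are random and depend on the community state, including other households. Ties must also be handled, for instance when deferral is only weakly better. I would first try the following pathwise coupling. Given any policy that charges $c\neq$ the proposed amount at $t$, construct a modified policy that shifts $\delta$ units to (or from) the next interval in which the EV is charged. The shifted units must still fit within the feasibility bounds $[m,M]$; this holds because $m_{i,t}$ is exactly the amount that cannot be deferred without violating the deadline. The marginal-price bounds then show that the expected surplus weakly increases. Because TPR prices are set from reports and are exogenous to the household as a price taker, the law of future prices is unchanged by the household's shift, and the argument closes by induction on $t$.
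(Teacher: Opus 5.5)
Your proposal is correct and follows essentially the same route as the paper. Both solve the TCL as a static concave program clipped at the NEM kink, reduce the EV problem to cost minimization against the residual renewable with completing sequences justified by $q'(0)>\pi^+$, and then run a pathwise exchange argument within the job window, using that every TPR payment has marginal slope in $[\pi^-,\pi^+]$ and that prices are exogenous to a price-taking household. One small precision: in the deferral cases (under $\pi^+$, or under the NEM tariff when $c_{i,t}>\tilde c_{i,t}$), the shifted $\epsilon$ must go to the earliest future interval with spare charging capacity, as in the paper's ``earliest feasible interval,'' not simply to the next interval in which the EV is charged.
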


Under $\pmb\pi^+$, every kWh trades at $\pi^+$: the TCL consumes up to $p^+_{i,t}$, each EV charges only its minimum $m_{i,t}$, and households with surplus renewables sell at the high rate $\pi^+$ rather than charge ahead of need. Under $\pmb\pi^-$, every kWh trades at $\pi^-$, so the TCL consumes up to $p^-_{i,t}$ and each EV charges to $M_{i,t}$ rather than sell at the low rate. Under $\pmb\pi$, the household faces the NEM tariff and behave exactly as a stand-alone customer, the TCL absorbing the local generation within $[p^+_{i,t},p^-_{i,t}]$, and the EV claiming the residual. 


\subsection{Properties of TPR}\label{sec:tpr-feasibility}

Here, we establish that the TPR satisfies the two pricing constraints of the upper-level optimization.

\begin{proposition}[Individual rationality and revenue adequacy \cite{jeon2026optimal}]\label{prop:ir}
	Under $\chi_{\textrm{\upshape TPR}}$, $S_i^{\chi_{\textrm{\upshape TPR}}}\ge S_i^{\textrm{\upshape NEM}}$ for all $i \in [N]$, and $\chi_{\textrm{\upshape TPR}}$ is revenue adequate in every interval.
\end{proposition}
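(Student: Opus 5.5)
We prove the two claims separately. Revenue adequacy is an interval-by-interval inequality that follows from the zone structure of Definition~\ref{def:tpr}. Individual rationality needs a comparison of value functions, and we handle it through the myopic response of Proposition~\ref{prop:response}.

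\emph{Revenue adequacy.} Fix $t$ and split on the zone. If $r_t\le\Delta_t^+$, Proposition~\ref{prop:response} gives $z_t=\sum_i(p^+_{i,t}+m_{i,t})-r_t=\Delta_t^+-r_t\ge 0$. So $P_{\pmb\pi}(z_t)=\pip z_t=\sum_i \pip z_{i,t}=\sum_i P_{\pmb\pi^+}(z_{i,t})$, and equality holds. The case $r_t>\Delta_t^-$ is symmetric: the responses $(p^-_{i,t},M_{i,t})$ give $z_t=\Delta_t^--r_t<0$, so the coordinator pays exactly $\pim z_t=\sum_i\pim z_{i,t}$. In the net-zero zone households pay $P_{\pmb\pi}(z_{i,t})$, and we use that $P_{\pmb\pi}$ is convex and positively homogeneous since $\pip>\pim$. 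Hence it is subadditive, $P_{\pmb\pi}(\sum_i z_{i,t})\le\sum_iP_{\pmb\pi}(z_{i,t})$, which is the required inequality.

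\emph{Individual rationality.} For each realized state and price, we compare the household's one-period problem under $\pmb\psi_t$ with that under the NEM tariff $\pmb\pi$. The key pointwise fact is that for every $z$,
\[
P_{\pmb\pi^+}(z)=\pip z\le P_{\pmb\pi}(z)
\]
exactly when $z\le0$, and $P_{\pmb\pi^-}(z)\le P_{\pmb\pi}(z)$ exactly when $z\ge 0$. So the TPR tariffs are not pointwise cheaper than NEM. The plan is therefore to dominate the \emph{optimal} NEM surplus rather than compare payments for a fixed action. For a fixed feasible action, the TPR price $\pmb\psi_t\in\{\pmb\pi^+,\pmb\pi,\pmb\pi^-\}$ is a function of the community state and is exogenous to household $i$, as the lower level assumes.

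I would argue by backward induction on the household value function $V_{i,t}$. The inductive claim is $V^{\chi}_{i,t}(\pmb x_{i,t},\cdot)\ge V^{\text{NEM}}_{i,t}(\pmb x_{i,t})$ in expectation over the price process. At each stage, take the NEM-optimal action $(\tilde p,\tilde c)$ from \eqref{eq:tcl-nem}--\eqref{eq:ev-nem} and use it as a feasible action under TPR, so the EV state transitions coincide. It then suffices to show the TPR one-step payment for that action is no larger than the NEM payment, or that any excess is recovered by re-optimizing.

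This is the main obstacle. Under $\pmb\pi^+$, a household with $z_{i,t}>0$ pays the same, and a household exporting is credited at $\pip>\pim$, which favours it. Under $\pmb\pi^-$, a household with $z_{i,t}<0$ receives the same credit, but a net importer pays only $\pim<\pip$, which also favours it. So in fact $P_{\pmb\pi^\pm}(z)\le P_{\pmb\pi}(z)$ pointwise, because $\pim z\le P_{\pmb\pi}(z)$ and $\pip z\le P_{\pmb\pi}(z)$ in the relevant sign regions. I would verify this carefully for both signs of $z$ before continuing.

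Once the pointwise bound is in hand, the NEM-optimal policy, applied unchanged under TPR, yields per-period surplus at least as large in every sample path. The TCL utility term is unchanged and the penalty term is identical. Maximizing over policies under TPR then gives $S_i^{\chi_{\rm TPR}}\ge S_i^{\rm NEM}$.

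One subtlety is that the TCL level in \eqref{eq:lower} is fixed by \eqref{eq:tcl-opt} at the broadcast price, not chosen freely. I would handle this by noting that the TPR-optimal TCL choice maximizes $U_{i,t}(p)-P_{\pmb\psi_t}(p-r_{i,t})$, which is at least its value at $\tilde p$, and that value is at least the NEM value by the pointwise bound. The EV part is then compared as above.
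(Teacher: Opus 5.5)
Your revenue-adequacy argument is complete and matches the paper's reasoning. In the outer zones, Proposition~\ref{prop:response} fixes the sign of $z_t$, which makes those zones revenue-neutral. In the net-zero zone, subadditivity of the sublinear $P_{\pmb\pi}$ is exactly the ``NEM price gap'' argument. Your individual-rationality strategy is also the paper's: run the NEM-optimal EV policy under TPR and invoke pointwise domination of payments. Two steps, however, need repair.

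First, your initial ``key pointwise fact'' is false. By \eqref{eq:NEM_maxform}, $P_{\pmb\pi}(z)=\max\{\pip z,\pim z\}$, so $\pip z\le P_{\pmb\pi}(z)$ and $\pim z\le P_{\pmb\pi}(z)$ for \emph{every} $z$, with equality for $z>0$ and for $z\le 0$ respectively. There is no obstacle here, and your later ``in fact'' sentence is the correct statement. The backward induction on $V_{i,t}$ is also unnecessary. A per-period inequality that holds for every value of $\pmb\psi_t\in\{\pmb\pi^+,\pmb\pi,\pmb\pi^-\}$ gives the result pathwise, without even using exogeneity of the price process.

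Second, and this is a genuine gap, the TCL step does not go through as written. The household is billed on the combined $z_{i,t}=p_{i,t}+c_{i,t}-r_{i,t}$, so its surplus does not separate into a ``TCL part'' and an ``EV part''. Suppose you split $P_{\pmb\psi}(p+c-r)=P_{\pmb\psi}(p-r)+[P_{\pmb\psi}(p+c-r)-P_{\pmb\psi}(p-r)]$. Your TCL inequality then holds, but the EV comparison fails under $\pmb\pi^+$. A household with $r_{i,t}\ge\tilde p_{i,t}+c$ charges from its own surplus at marginal cost $\pim$ under NEM but at $\pip$ under TPR. Likewise, ``the TCL utility term is unchanged'' is false in the outer zones, where $p_{i,t}(\pmb\psi_t)=p^\pm_{i,t}$.

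The fix is to keep $c=\tilde c_{i,t}$ and compare the \emph{total} per-period surplus through an intermediate term:
\begin{align*}
&U_{i,t}(p_{i,t}(\pmb\psi_t))-P_{\pmb\psi_t}(p_{i,t}(\pmb\psi_t)+c-r_{i,t})\\
&\quad\ge U_{i,t}(\tilde p_{i,t})-P_{\pmb\psi_t}(\tilde p_{i,t}+c-r_{i,t})\\
&\quad\ge U_{i,t}(\tilde p_{i,t})-P_{\pmb\pi}(\tilde p_{i,t}+c-r_{i,t}).
\end{align*}
The second inequality is pointwise domination. The first needs the missing idea. In the outer zones $P_{\pmb\psi_t}$ is \emph{linear}, so $p^\pm_{i,t}$ maximizes $U_{i,t}(p)-\pi^\pm(p+c-r_{i,t})$ for every fixed $c$, not only for $c=0$ as in \eqref{eq:tcl-opt}. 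In the net-zero zone $p_{i,t}(\pmb\pi)=\tilde p_{i,t}$, so the first inequality is an equality. The EV trajectories and penalties coincide under both tariffs, so summing over $t$ and taking expectations gives $S_i^{\chi_{\textrm{\upshape TPR}}}\ge S_i^{\textrm{\upshape NEM}}$. This use of outer-zone linearity is exactly what the price-elastic TCL adds beyond the price-inelastic setting of \cite{jeon2026optimal}.
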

Individual rationality holds because TPR offers a price at least as favorable as NEM in every zone: in the net-consuming zone, a net-exporting household sells at $\pip$ rather than $\pim$, and in the net-producing zone, a net-importing household buys at $\pim$ rather than $\pip$. Revenue adequacy holds because the internal price is never more generous than the price the coordinator faces from the utility: the outer zones are revenue-neutral, and in the net-zero zone the NEM price gap guarantees a non-negative surplus.

TPR is suboptimal only in net-zero zone. Under a homogeneous community---Bernoulli arrival at rate $\alpha$, durations bounded by $\bar T$, renewables i.i.d. across households and stationary with mean $\theta_r$, and common utility $U$ so that $p_{i,t}^+ = p^+$ and $p_{i,t}^- = p ^-$---that zone is visited with vanishing probability as $N$ grows. 
\begin{theorem}[Asymptotic optimality of TPR \cite{jeon2026optimal}]\label{thm:asymptotic}
	Let $W^{N,\chi_{\textrm{\upshape TPR}}}(\pmb x_1)$ be the expected
	community welfare under pricing rule $\chi_{\textrm{\upshape TPR}}$. If
	$\alpha<(\theta_r - p^-)/(\bar c\,\bar T)$, then
	\begin{equation*}\label{eq:asymptotic}
		\lim_{N\to\infty}\frac{1}{N}
		\big(W^N(\pmb x_1)-W^{N,\chi_{\textrm{\upshape TPR}}}(\pmb x_1)\big)=0.
	\end{equation*}
\end{theorem}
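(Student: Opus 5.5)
Our plan is a coupling argument. We run the centralized optimal policy $\nu^*$ of Theorem~\ref{thm:centralized} and the TPR-induced household responses of Proposition~\ref{prop:response} on the same sample path of arrivals and renewables. Outside the net-zero zone the two prescribe identical actions: under $\pmb\pi^+$ each household chooses $(p^+,m_{i,t})$, and under $\pmb\pi^-$ it chooses $(p^-,M_{i,t})$, exactly as $\nu^*$ does. Hence the two trajectories coincide up to the first interval $t$ at which the coupled state enters the net-zero zone $\Delta_t^+<r_t\le\Delta_t^-$. After that interval they may diverge.

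The welfare gap is therefore bounded by
\[
W^N - W^{N,\chi_{\rm TPR}} \le \sum_{t=1}^T \mathbb{P}(E_t)\cdot C N ,
\]
where $E_t$ is the event that the centralized trajectory is in the net-zero zone at time $t$. The constant $C$ bounds the per-household, per-horizon welfare loss, which is finite for the following reasons:
\begin{itemize}
\item utilities are evaluated on the compact range $[p^+,p^-]$;
\item payments are bounded by $\pi^+$ times bounded consumption (charging at most $\bar c$, TCL at most $p^-$), provided renewables have finite mean;
\item penalties are bounded by $q(\bar c\bar T)$.
\end{itemize}
The per-household expected surplus difference is thus $O(1)$ in expectation, uniformly in $N$. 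To be safe we bound the loss in expectation, $\mathbb{E}[|\text{loss}|\mathbbm 1(E)]\le \sqrt{\mathbb{P}(E)}\,\sqrt{\mathbb{E}[\text{loss}^2]}$, which uses finite second moments of $r_{i,t}$. It then suffices to show $\mathbb{P}(\cup_t E_t)\to 0$.

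To show this, note that $E_t\subseteq\{r_t\le \Delta_t^-\}$, and $\Delta_t^-=\sum_i (p^- + M_{i,t})\le N p^- + \bar c\,\#\{i: d_{i,t}>0\}$. The number of occupied chargers is stochastically dominated by counting, for each household, whether an arrival occurred in the last $\bar T$ intervals. Its expectation is therefore at most $N\alpha\bar T$, and it is a sum of independent bounded variables across households, since occupancy dynamics are policy-independent: every EV leaves at its deadline. So $\frac1N\Delta_t^-\le p^-+\bar c\alpha\bar T+o(1)$ almost surely by the strong law of large numbers, or with exponential concentration by Hoeffding. Meanwhile $\frac1N r_t\to\theta_r$ by the law of large numbers, using i.i.d. renewables with stationary mean $\theta_r$. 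The light-traffic condition $\alpha<(\theta_r-p^-)/(\bar c\bar T)$ gives $\theta_r>p^-+\bar c\alpha\bar T$. Hence $\mathbb{P}(r_t\le\Delta_t^-)\to0$ for each $t$, and a union bound over the finite horizon $T$ gives $\mathbb{P}(\cup_tE_t)\to0$.

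In this regime both policies operate in the net-producing zone and coincide. Dividing the gap bound by $N$ yields the limit $0$. Since the centralized value upper-bounds TPR, the difference is nonnegative, which concludes the proof.

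The main obstacle is making the coupling and the loss bound rigorous. One difficulty is that after divergence the EV states differ, but only the occupancy, not the remaining demand, is policy-independent. The other is controlling the unbounded renewable contribution. I would handle both by bounding the loss per household on $E_t$ through the bounded action ranges and via Cauchy–Schwarz as above, noting that renewable revenue terms $\pi^\pm r_{i,t}$ differ between policies only by the price applied, hence by at most $(\pi^+-\pi^-)r_{i,t}$.
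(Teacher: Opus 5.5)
Your proposal is correct and matches the argument the paper sketches. TPR reproduces the centralized threshold actions outside the net-zero zone, so the coupled trajectories agree until the first net-zero visit, and light traffic plus the law of large numbers (with policy-independent occupancy of mean at most $\alpha\bar T$ per household) gives $\mathbb{P}(r_t\le\Delta_t^-)\to 0$ for each of the finitely many $t$.

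Two points to tighten:

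\emph{Initial state.} Your occupancy bound tacitly assumes idle chargers and stationary $r_{i,1}$ at $t=1$. The statement implicitly needs this, since a heavily occupied $\pmb x_1$ can put the first interval in the net-zero zone.

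\emph{Loss bound.} The Cauchy--Schwarz step, its second-moment assumption, and the unproved claim that centralized TCL levels lie in $[p^+,p^-]$ are all avoidable. Pathwise, every policy satisfies $W_t\le N\max_{p\ge 0}[U(p)-\pi^-p]+\pi^-r_t$, while TPR satisfies $W_t\ge \pi^-r_t-O(N)$, so the renewable terms cancel and the per-interval gap is $O(N)$ deterministically.
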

The light-traffic condition $(\alpha<(\theta_r - p^-)/(\bar c\,\bar T))$ requires the arrival rate to be low enough so that aggregate generation can serve every TCL at $p^-$ and can fully charge every EV present. The quantity $N(\theta_r - p^-)/\bar c$ represents the number of EVs that the community's surplus generation can serve simultaneously, and the bound tightens with $\bar T$ because longer durations keep more EVs present at once. 
\section{Numerical Simulations} \label{sec:numericalsimulation}
This section evaluates the proposed distributed scheduling algorithm numerically; we demonstrate asymptotic optimality and individual surplus gains. 

\subsection{Simulation setting }\label{sec:simulation_setting}
Each of the $N$ community members had a BTM solar generator, an EV charger, and a TCL, scheduled over a 24-hour horizon with 1-hour intervals. 

\subsubsection{Evaluation datasets}
Experiments combine real and synthetic data. EV demand data are drawn from Caltech's Adaptive Charging Network (ACN-Data) dataset \cite{lee_acndata_2019}. Renewable generation is represented two different ways depending on the experiment: i.i.d. draws from a stationary lognormal distribution for the asymptotic-optimality study, and measured BTM solar output from Pecan Street's New York residential customers for the individual-surplus-gain study.



\subsubsection{EV and TCL demands}
The arrival of EVs was modeled as a homogeneous Bernoulli process, with each vehicle's charging duration drawn from a truncated Gaussian centered at 5 hours. Charging was capped at 7.2 kWh. 

We modeled the utility of TCL demand as a time-invariant quadratic function, $U_i(p_{i,t}) = a_i p_{i,t} - \frac 1 2 b_i p_{i,t}^2$. For the asymptotic optimality simulation, we set $a_i = 1$ and $b_i = 1.5$ for all $i$. For the individual surplus gain simulation, we estimated $a_i$ and $b_i$ from Pecan Street household consumption profiles using the elasticity value in \cite{asadinejad2018evaluation} and the NEM rates.  

\subsubsection{Energy price}
The time-invariant NEM rates were $(\pi^+, \pi^-) = (0.5$ \$/kWh, $0.2$ \$/kWh$)$.

\subsection{Asymptotic optimality of distributed scheduling} 
We validate the asymptotic optimality of TPR established in Theorem~\ref{thm:asymptotic}. Since computing the centralized optimal policy is intractable, we compared TPR against the Oracle policy that solves an open-loop optimization with perfect knowledge of all random variables, providing an upper bound on achievable surplus. We also compared two suboptimal centralized scheduling policies: the threshold policy in Theorem~\ref{thm:centralized} with TCL scheduled by (\ref{eq:tcl-nem}) and remaining aggregate renewables allocated to plugged-in EVs by least-laxity-first (LLF) for the net-zero zone policy $\pmb \rho_t$, and model predictive control (MPC).
\setlength{\belowcaptionskip}{-8pt}
 \begin{figure}[t]
 	\centering
 	\includegraphics[width=0.7\columnwidth]{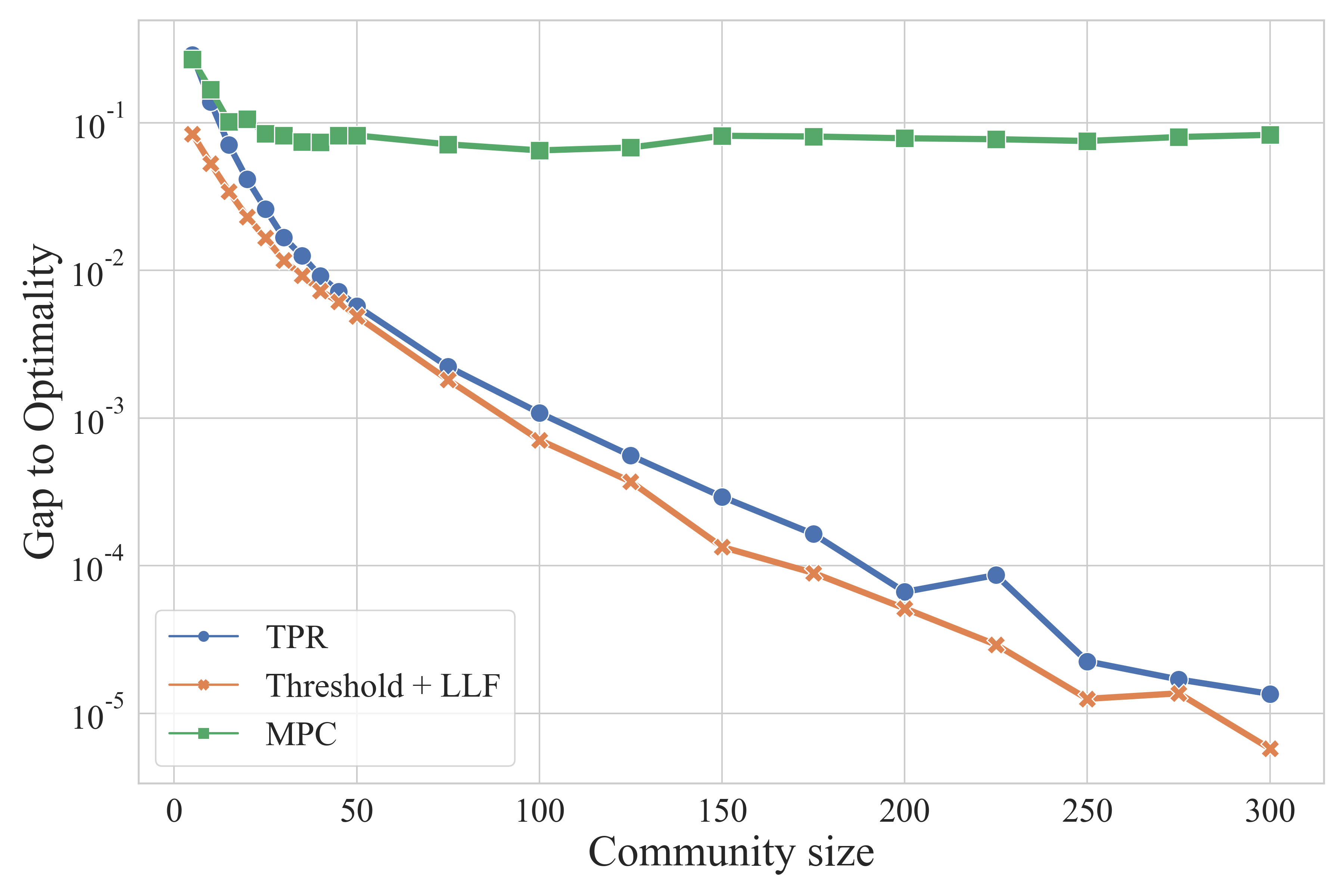}
 	\caption{Optimality gap of centralized and distributed policies (log $y$ axis). Renewable generation has mean $\theta_r = 2$ kWh, and EV arrival rate is $\alpha = (\theta_r - p^-) / (\bar c \bar T)$, with $\bar T = 6$ hours.}
 	\label{fig:asymptoticopt}
 \end{figure}
 
Fig~\ref{fig:asymptoticopt} plots, on a log scale, the per-household gap in community surplus between each policy and the Oracle benchmark. Both TPR and the threshold policy with the LLF net-zero zone scheduling exhibited exponential decay as the community grew, with the latter converging faster, while the gap of MPC persisted. Although Theorem~\ref{thm:asymptotic} only establishes asymptotic optimality only for TPR, suggests that asymptotic optimality arises from the threshold structure shared by TPR and the centralized policy, and that convergence may in fact be exponential. 

\begin{figure}[t]
	\centering
	\includegraphics[width=0.7\columnwidth]{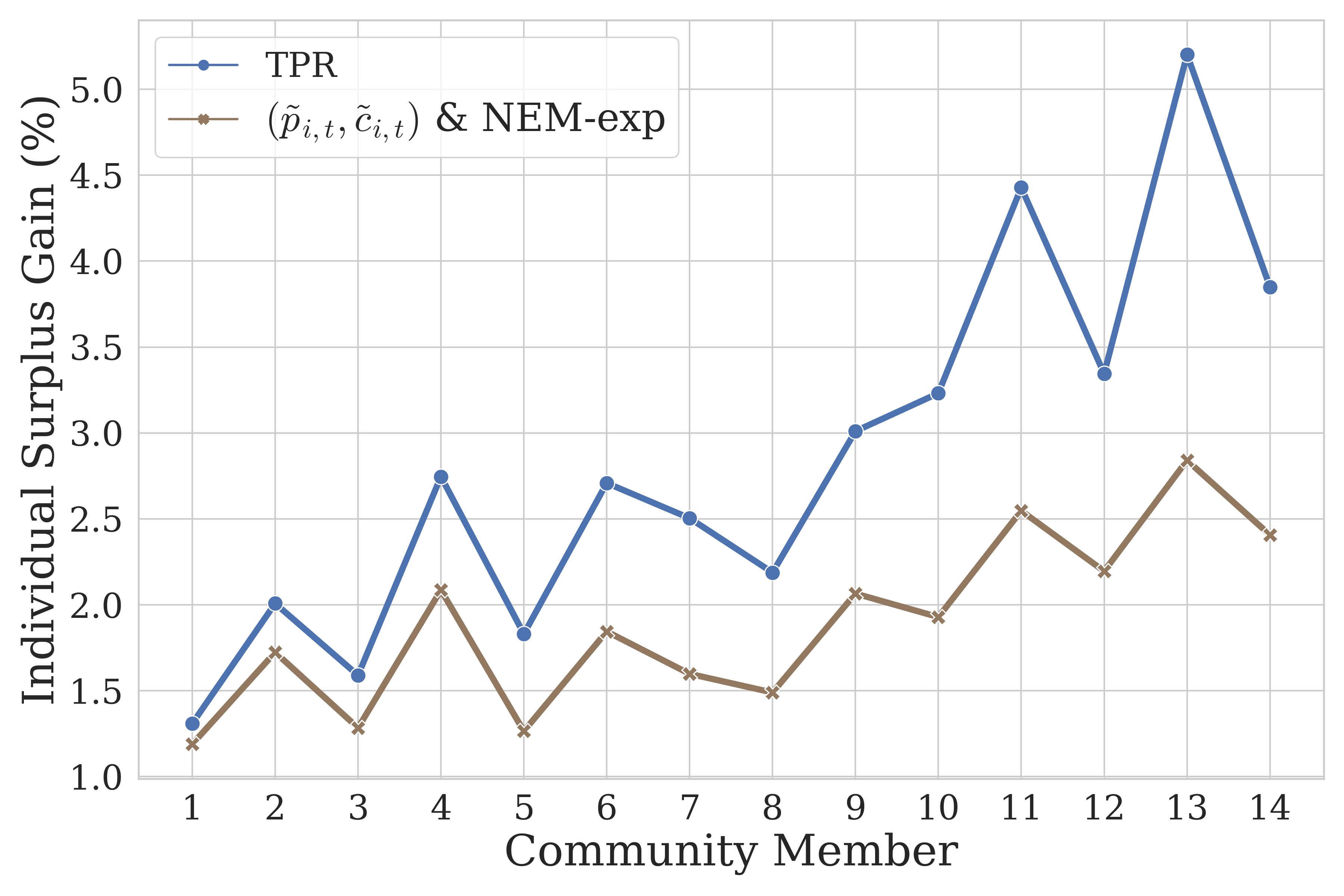}
	\caption{Individual surplus gains in percentage}
	\label{fig:indcostsaving}
\end{figure}

\subsection{Comparisons of individual surplus gains}
We validate TPR's individual rationality by measuring each member's surplus gain under TPR as a percentage of its stand-alone surplus under NEM. We also compare the individual surplus gain to that under the pricing rule proposed in \cite{chakraborty2018analysis}, referred to as NEM \textit{ex-post} (NEM-exp). NEM-exp decides community prices based on the sign of the community net-consumption $z_t$:  
\begin{equation}\label{eq:chak_pricingrule}
	P_{\mbox{\tiny NEM-exp}} (z_{i,t}) = \begin{cases}
		\pi^+ z_{i,t}, & z_t > 0 \\
		\pi^- z_{i,t}, & z_t \le 0
	\end{cases}.
\end{equation}
Since price is decided only after $z_{i,t}$ is realized, it requires a separate scheduling policy. We use the optimal decision for a stand-alone customer under NEM, which is $(\tilde p_{i,t}, \tilde c_{i,t})$ in equations (\ref{eq:tcl-nem}) and (\ref{eq:ev-nem}). 

In the community of size $N = 14$, we indexed members by descending order of renewable generation capacity. In Fig.~\ref{fig:indcostsaving}, TPR delivered 10.08\%–83.13\% larger surplus gains than $(\tilde p_{i,t}, \tilde c_{i,t})$ with (\ref{eq:chak_pricingrule}). Households with smaller generation capacity achieved the larger gains, benefiting more from purchasing shared renewables at price $\pim$ to charge their EVs and schedule TCLs with higher consumption surplus.

\section{Conclusions}
This paper developed a pricing mechanism for the joint distributed scheduling of deferrable and non-deferrable flexible demands, derived from the structure of the optimal centralized schedule and shown to be individually rational, revenue adequate, and asymptotically optimal in community welfare.

Two directions remain open. TPR's thresholds are computed from the consumption levels members report, so a member could shift a zone boundary in its favor by misreporting; designing a rule under which truthful reporting is incentive compatible is a natural next step. The suboptimality lies entirely in the net-zero zone, where the optimal allocation requires solving a Bellman equation; characterizing the welfare loss of TPR in the net-zero zone would provide a suboptimality bound for finite communities. 

\bibliographystyle{IEEEtran}
\bibliography{JeonTongZhaoRef.bib}
\appendices
\section{Proof of Theorem 1 and Proposition 1}
\subsection{Preliminaries and notation}

Write $\pmb d_t = (d_{i,t})_{i \in [N]}$, $\pmb \tau_t = (\tau_{i,t})_{i \in [N]}$. The feasible action set at $t$ is
\[
\mathcal F_t(\pmb d_t) := 
\{
(\pmb p, \pmb c)\in \mathbb R_+^N \times \mathbb R_+^N:
c_i \le M_{i,t} (d_{i,t}).
\}
\]
Define the total consumption as $y(\pmb p, \pmb c):= \sum_i  (p_i + c_i)$, so the community net consumption is $z_t = y(\pmb p, \pmb c) - r_t$. Problem (\ref{eq:centralized}) is a finite-horizon MDP whose value function satisfies $V_{T+1} \equiv 0$, and for $t \in \mathcal T$, 
\begin{align}
	&V_t(\pmb x_t) = \underset{(\pmb p, \pmb c)\in \mathcal F_t(\pmb d_t)}{\max} J_t(\pmb p, \pmb c), \label{eq:Bellman}\\
	&J_t(\pmb p, \pmb c) := \sum_{i\in [N]} U_{i,t}(p_i) - P_{\pmb \pi}(y(\pmb p, \pmb c) - r_t ) + \Phi_t(\pmb x_t, \pmb c) \label{eq:J_t}\\
	&\Phi_t(\pmb x_t, \pmb c) := 
	-\sum_{j \in \mathcal J_t} q(d_{j,t} - c_j) + \mathbb E[V_{t+1}(\pmb x_{t+1}) | \pmb x_t, \pmb c].\label{eq:Phi_t}
\end{align}
By (\ref{eq:Bellman})-(\ref{eq:J_t}), $\pmb x_{t+1}$ depends on $\pmb c$ only through the residual demands $d_{i,t} - c_{i,t}$ of the chargers with $\tau_{i,t} \ge 2$; for $i \in \mathcal J_t$ the vehicle departs and the charger is idle at $t+1$ regardless of $c_{i,t}$. Call coordinate $i$ \textit{feasible} at $t$ if $d_{i,t} \le \tau_{i,t} \bar c$, so that the deadline can still be met; by footnote 1, every arrival is feasible. 

Since $\pip > \pim$, the NEM payment (\ref{eq:NEM_payment}) is the upper envelope of two linear prices, 
\begin{equation}\label{eq:NEM_maxform}
P_{\pmb \pi}(z) = \max\{\pip z, \pim z\}.
\end{equation}
Accordingly define, for $\pi^\bullet\in\{\pi^+,\pi^-\}$, the \emph{linear-price} objectives
\[
J_t^{\bullet}(\bp, \bc):= \sum_{i\in [N]}U_{i,t}(p_i )-
\pi^{\bullet}(y(\bp, \bc) - r_t) + \Phi_t(\bx_t, \bc),
\]
so that 
\[
J_t = \min\{J_t^+, J_t^-\}.
\]

\subsection{Sandwich bound on the value of deferrable demand}
The entire proof rests on the following bound: one kWh of EV demand can always be served later at a marginal cost at most $\pip$, and relieving one kWh saves at least $\pim$. 

\begin{lemma}\label{lemma1}
	For every $t$ and every $(\btau_t, \br_t)$, the mapping $\bd \mapsto V_t(\bd, \btau_t, \br_t)$ satisfies
	\begin{enumerate}
		\item it is concave on $\mathbb R_+^N$;
		\item $V_t(\bd) - V_t(\bd') \ge \pim \mathbf 1^T(\bd' - \bd)$ for all $\bd \le \bd'$;
		\item $V_t(\bd) - V_t(\bd') \le \pip \mathbf 1^T(\bd' - \bd)$ for all $\bd \le \bd'$ with $\bd'$ feasible;
		\item $V_t(\bd) - V_t(\bd') \ge q'(0)(d_i' - d_i)$ whenever $\bd, \bd'$ differ only in coordinate $i$ and $d_i' \ge d_i \ge \tau_{i,t} \bar c$, 
	\end{enumerate}
	where $\bd \le \bd'$ is defined as $d_i \le d_i'$ for all $i$. 
\end{lemma}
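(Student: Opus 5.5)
We will prove all four claims simultaneously by backward induction on $t$, from the terminal condition $V_{T+1}\equiv 0$, for which (1)--(4) hold trivially. Assume (1)--(4) hold for $V_{t+1}$ for every $(\btau_{t+1},\br_{t+1})$.

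\emph{Step 1 (concavity of $\Phi_t$ in $(\bd,\bc)$).} Fix $(\btau_t,\br_t)$. For a charger with $\tau_{i,t}\ge 2$, the next-stage remaining demand is $d_{i,t}-c_i$. For an idle charger or an expiring one ($i\in\mathcal J_t$), the next-stage state is a fresh arrival $(D_i,T_i)$ independent of $c_i$, and $\br_{t+1}$ is independent of $\bc$. Hence $\mathbb E[V_{t+1}\mid\bx_t,\bc]$ is an average over $(\br_{t+1},\text{arrivals})$ of $V_{t+1}$ composed with the affine map $(\bd,\bc)\mapsto \bd-\bc$ on the non-expiring coordinates. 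It is therefore jointly concave by the induction hypothesis. The penalty term $-q(d_j-c_j)$ is jointly concave because $q$ is convex. In $J_t$, the term $-P_{\pmb\pi}(y-r_t)=-\max\{\pip z,\pim z\}$ is concave, and $U_{i,t}$ is concave. The feasible set $\{(\bd,\bp,\bc): 0\le c_i\le \min\{d_i,\bar c\},\,p_i\ge 0\}$ is convex. Partial maximization of a jointly concave function over a convex set preserves concavity, which gives (1).

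\emph{Step 2 (lower bound (2)).} Take an optimal action $(\bp',\bc')$ at $\bd'$ and build a feasible action at $\bd\le\bd'$. We use the same $\bp'$ and set $c_i=\min\{c_i',d_i\}$, or more carefully reduce $c_i$ only as far as needed so that residuals satisfy $d_i-c_i\le d_i'-c_i'$. Consuming less raises the NEM term by at least $\pim\mathbf 1^T(\bc'-\bc)$, because the slope of $-P_{\pmb\pi}$ is at least $-\pip$ and the payment drops by at least $\pim$ per kWh. Residuals weakly decrease, so the induction hypothesis (2) for $V_{t+1}$ contributes at least $\pim$ times the residual reduction. Penalties decrease because $q$ is increasing, with slope at least $q'(0)>\pip\ge\pim$. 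Summing, the total gain is at least $\pim\mathbf 1^T(\bd'-\bd)$.

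\emph{Step 3 (upper bound (3)).} Here we go from $\bd$ to a feasible $\bd'$: take the optimal action at $\bd$ and charge extra now, $c_i'=c_i+\delta_i$ with $\delta_i$ chosen so the residual $d_i'-c_i'$ stays feasible for the next stage. The leftover excess is pushed to the next stage, where the induction hypothesis (3) applies. The current-stage cost of the extra consumption is at most $\pip$ per kWh, since $P_{\pmb\pi}$ is $\pip$-Lipschitz from above. The main obstacle is this step. We must ensure that $d'_i-c'_i\le(\tau_{i,t}-1)\bar c$ with $c'_i\le \bar c$, and that no penalty is incurred at the deadline, so that the $q$ terms do not break the $\pip$ bound. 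This is exactly where feasibility of $\bd'$ is needed: with $d_i'\le\tau_{i,t}\bar c$, the choice $\delta_i=\min\{d_i'-d_i,\ \bar c-c_i\}$ leaves a residual increase that is feasible at $t+1$. When $\tau_{i,t}=1$, feasibility forces $d_i'\le\bar c$, so all demand is served now and no penalty arises.

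\emph{Step 4 (bound (4)).} Suppose $d_i\ge\tau_{i,t}\bar c$. Then in both states the charger is maximal-constrained, and the extra $d_i'-d_i$ can never be served before the deadline. Coupling the optimal policy from $\bd'$, applied to $\bd$, shows that the extra demand ends up entirely as additional unserved demand at the deadline. By convexity of $q$, this costs at least $q'(0)(d_i'-d_i)$. Inductively, the infeasible region persists ($d_i-c_i\ge(\tau_{i,t}-1)\bar c$), so the same argument recurses to the deadline stage, where the bound follows directly from convexity of $q$. The inequality $q'(0)>\pip$ is then used downstream to separate (3) from (4).
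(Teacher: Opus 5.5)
Your proposal is correct and follows essentially the same route as the paper: backward induction from $V_{T+1}\equiv 0$, concavity by partial maximization of a jointly concave objective over a convex set, and the same feasible-action constructions. For (2) you truncate the $\bd'$-optimal charge to $\min\{c_i,d_i\}$; for (3) you raise the $\bd$-optimal charge by $\min\{d_i'-d_i,\bar c-c_i\}$; for (4) you reuse the $\bd'$-optimal action at $\bd$ and apply convexity of $q$ at the deadline. One wording fix: in (3) with $\tau_{i,t}=1$, the reason is not that no penalty arises but that $\delta_i=d_i'-d_i$, so the expiring residual $d_i-c_i$, and hence the penalty, is unchanged.
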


\begin{proof}
	Backward induction on $t$, the claims being vacuous at $t = T+1$. Assume they hold at $t+1$. 
	
	\begin{enumerate}
		\item The stage reward in (\ref{eq:Bellman}) is jointly concave in $(\bd, \bp, \bc)$: each $U_{i,t}$ is concave, $-P_{\pmb \pi}(\cdot)$ is concave, and $-q(d_j - c_j)$ is concave. The continuation term is concave in $(\bd, \bc)$ by the induction hypothesis, since the successor demand $d_{i,t} - c_{i,t}$ is affine and the expectation preserves concavity. The constraint set $\{(\bd, \bp, \bc): \bp \ge 0, 0 \le c_i \le d_i, c_i \le \bar c\}$ is convex. Partial maximization of a jointly concave function over a convex set is concave in the remaining varaible $\bd$. \\
		
		\item Fix $\bd \le \bd'$ and let $(\bp, \bc)$ be optimal at $\bd'$. Set $c_i^{\circ}:= \min\{c_i, M_{i,t}(d_i)\}$ and $\delta_i := c_i - c_i^\circ \in [0, d_i' - d_i]$, so $(\bp, \bc^\circ)$ is feasible at $\bd$ and 
		\[
		(d_i - c_i^\circ) - (d_i' - c_i) = \delta_i - (d_i' - d_i) \le 0.
		\]
		Relative to $(\bp, \bc)$, the action $(\bp, \bc^\circ)$ consumes $\mathbf 1^T \pmb \delta $ less energy, which by (\ref{eq:NEM_maxform}) lowers the payment by at least $\pim \mathbf 1^T \pmb \delta$, and it leaves every residual demand smaller by $(d_i' - d_i) - \delta_i$, which by the induction hypothesis raises $\Phi_t $ by at least $\pim \sum_i [(d_i' - d_i) - \delta_i]$. Adding the two gives $V_t(\bd) \ge J_t(\bp, \bc^\circ) \ge V_t(\bd' ) + \pim \mathbf 1^T (\bd' - \bd)$. \\
		
		\item  Fix $\bd \le \bd'$ with $\bd'$ feasible and let $(\bp, \bc)$ be optimal at $\bd$. Put $c_i^\circ:= c_i + \min\{d_i' - d_i, \bar c - c_i\}$ and $\eta_i := c_i^\circ - c_i$, so $(\bp, \bc^\circ)$ is feasible at $\bd'$. It consumes $\mathbf 1^T \pmb \eta$ more energy, raising the payment by at most $\pip \mathbf 1^T \pmb \eta$, and its residual demands exceed those of $(\bp, \bc)$ by $(d_i' - d_i) - \eta_i$; each such residual is at most $d_i' - c_i^\circ \le \tau_{i,t} \bar c- \bar c = (\tau_{i,t} - 1)\bar c$ when $\eta_i \le d_i' - d_i$ (the cap binds), hence feasible at $t+1$, and equals $d_i - c_i$ otherwise. The induction hypothesis (or convexity of $q$ for $i \in \mathcal J_t$, whose successor is unaffected) therefore bounds the loss in $\Phi_t$ by $\pip \sum_i [(d_i' - d_i) - \eta_i]$. Adding, $V_t(\pmb d') \ge J_t(\bp, \bc ^\circ) \ge V_t(\bd) - \pip \mathbf 1^T (\bd' - \bd)$. \\
		
		\item Let $d_i' \ge d_i \ge \tau_{i,t} \bar c$ and let $(\bp, \bc)$ be optimal at $\bd'$. Since $d_i' \ge \bar c$, the action $(\bp, \bc)$ is feasible at $\bd$ as well, and it costs the same there; only the residual of coordinate $i$ chages, from $d_i' - c_i$ down to $d_i - c_i \ge \tau_{i,t} \bar c- \bar c = (\tau_{i,t} - 1)\bar c$. If $\tau_{i,t} \ge 2$, the induction hypothesis gives a gain of at least $q'(0)(d_i' - d_i)$. If $\tau_{i,t} = 1$ the gain is $q(d_i' - c_i) - q(d_i - c_i)\ge q'(0) (d_i' - d_i)$ by convexity of $q$ and $d_i - c_i \ge 0$. In either case, $V_t(\bd )\ge V_t(\bd') + q'(0) (d_i' - d_i)$. 
	\end{enumerate}
\end{proof}
Let $\pmb m_t := (m_{i,t}(d_{i,t}, \tau_{i,t}))_i$, $\pmb M_t := (M_{i,t}(d_{i,t}))_i$ and $\mathcal C_t := \prod_{i\in[N]} [m_{i,t}, M_{i,t}].$

\begin{corollary}
	Assume every coordinate is feasible at $t$. Then $\Phi_t(\bx_t, \cdot)$ is concave, every $\bc \in \mathcal F_t(\bd)$ with $\bc \notin \mathcal C_t$ is strictly dominated, and for every $\bc \in \mathcal C_t$, 
	\begin{align}
		\pim \mathbf 1^T (\pmb M_t - \bc) \le \Phi_t(\bx_t, \pmb M_t) - \Phi_t(\bx_t, \bc) \label{eq:18}\\
		\Phi_t(\bx_t, \bc ) - \Phi_t(\bx_t, \pmb m_t) \le \pip \mathbf 1^T (\bc - \pmb m_t)
		\label{eq:19}.
	\end{align}
\end{corollary}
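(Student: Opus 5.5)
The plan is to write $\Phi_t(\bx_t,\bc)$ as a sum of a penalty part over $\mathcal J_t$ and a continuation part $\mathbb E[V_{t+1}(\pmb d_t-\bc,\ldots)\mid \bx_t]$, where the successor demand enters only through the residuals $d_{i,t}-c_i$ for $\tau_{i,t}\ge 2$. The corollary then follows by transferring Lemma~\ref{lemma1} at stage $t+1$ through this affine map, together with convexity of $q$ for the departing coordinates.

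\emph{Concavity.} For $i\in\mathcal J_t$, $c\mapsto -q(d_{i,t}-c)$ is concave because $q$ is convex. For the remaining coordinates, the successor demand is affine in $\bc$. The arrival randomness, the renewable transition and the deadline decrement do not depend on $\bc$. Hence concavity of $V_{t+1}$ in $\bd$ (Lemma~\ref{lemma1}, part 1) is preserved under the affine composition and under expectation.

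\emph{Domination outside $\mathcal C_t$.} Since $\mathcal F_t$ already enforces $c_i\le M_{i,t}$, a point $\bc\notin\mathcal C_t$ must have some $c_i<m_{i,t}$, which forces $m_{i,t}>0$. Raise $c_i$ to $m_{i,t}$ and keep $\bp$ fixed.
\begin{itemize}
\item If $\tau_{i,t}=1$, then $m_{i,t}=d_{i,t}$. The penalty falls by $q(d_{i,t}-c_i)>q'(0)(m_{i,t}-c_i)$, by strict monotonicity and convexity of $q$.
\item If $\tau_{i,t}\ge2$, the residual drops from $d_{i,t}-c_i$ to $(\tau_{i,t}-1)\bar c=\tau_{i,t+1}\bar c$. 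Part 4 of Lemma~\ref{lemma1} at $t+1$ then gives a continuation gain of at least $q'(0)(m_{i,t}-c_i)$, pointwise in the random next state, since $\tau$ is deterministic while the EV stays plugged in.
\end{itemize}
In both cases the payment rises by at most $\pip(m_{i,t}-c_i)$, by \eqref{eq:NEM_maxform}. Because $q'(0)>\pip$, $J_t$ strictly increases. I would run this argument with $\bp$ fixed, so it shows domination in the full objective $J_t$.

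\emph{Bounds \eqref{eq:18}--\eqref{eq:19}.} Take $\bc\in\mathcal C_t$.
\begin{itemize}
\item For \eqref{eq:18}, compare $\bc$ with $\pmb M_t\ge\bc$. For $\tau_{i,t}\ge2$, the residuals satisfy $\bd-\pmb M_t\le \bd-\bc$. Part 2 of Lemma~\ref{lemma1} at $t+1$ gives a continuation gain of at least $\pim$ times the reduction. For $i\in\mathcal J_t$, note that $m_{i,t}=M_{i,t}=d_{i,t}$ under feasibility with $\tau=1$, so these coordinates contribute zero.
\item For \eqref{eq:19}, compare $\pmb m_t\le\bc$. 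The residuals under $\pmb m_t$ are $d_{i,t}-m_{i,t}\le(\tau_{i,t}-1)\bar c$, so the larger residual vector is feasible at $t+1$. Part 3 of Lemma~\ref{lemma1} then bounds the loss by $\pip$ times the increase.
\end{itemize}
Both inequalities hold pointwise for each realization of the arrivals at previously idle chargers and of $\br_{t+1}$, because those coordinates are unaffected by $\bc$. Taking expectations finishes the proof.

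\emph{Main obstacle.} The delicate step is the feasibility bookkeeping. I need to check that the successor state inherits the hypotheses of Lemma~\ref{lemma1}, in particular that the vector $\bd'$ in part 3 is feasible at $t+1$, and that newly arriving EVs lie in coordinates independent of $\bc$. The same bookkeeping is needed to make the $\tau_{i,t}=1$ case consistent, namely that $m=M$ there.
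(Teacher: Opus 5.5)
Your proposal is correct and follows essentially the paper's argument. Concavity comes from Lemma~\ref{lemma1}(1) through the affine residual map and the expectation. Strict domination comes from raising a deficient $c_i$ to $m_{i,t}$ and weighing the payment increase of at most $\pip(m_{i,t}-c_i)$ against the $q'(0)$ gain from Lemma~\ref{lemma1}(4), or from convexity of $q$ when $\tau_{i,t}=1$. The bounds \eqref{eq:18}--\eqref{eq:19} come from Lemma~\ref{lemma1}(2)--(3) applied to the residuals, with the $\mathcal J_t$ coordinates inactive. You are simply more explicit than the paper that the lemma is invoked at stage $t+1$, pointwise in the next-state randomness.

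One cosmetic correction: for $\tau_{i,t}=1$ the penalty falls by $q(d_{i,t}-c_i)-q(0)\ge q'(0)(m_{i,t}-c_i)$. Convexity gives only this weak inequality (it is an equality if $q$ is linear), not the strict one you wrote, but weak suffices because $q'(0)>\pip$ is already strict.
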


\begin{proof}
	Concavity was shown in the proof of Lemma~\ref{lemma1} (1). For $\bc \in \mathcal C_t$ the residual demands satisfy $d_{i,t} - c_{i,t} \le d_{i,t} - m_{i,t} \le (\tau_{i,t} - 1) \bar c$, so all successor states are feasible; equations (\ref{eq:18}) and (\ref{eq:19}) are then Lemma~\ref{lemma1} (2) applied between the residuals of $\bc$ and $\pmb M_t$, and Lemma~\ref{lemma1} (3) applied between those of $\pmb m_t$ and $\pmb c$ (for $i \in \mathcal J_t$ we have $m_{i,t} = d_{i,t} = M_{i,t}$ by feasibility, so those coordinates are inactive). 
	
	Suppose $c_i < m_{i,t}$ for some $i$; then $m_{i,t} > 0$, hence $d_{i,t} - m_{i,t} = (\tau_{i,t} - 1) \bar c$ and $d_{i,t} - c_i > (\tau_{i,t} - 1) \bar c$. Raising $c_i$ to $m_{i,t}$ increases the payment by at most $\pip(m_{i,t} - c_i)$ by (\ref{eq:NEM_maxform}), and increases $\Phi_t$ by at least $q'(0)(m_{i,t} - c_{i,t})$ by Lemma~\ref{lemma1}(4). Since $q'(0) > \pip$, the move strictly improves $J_t$. 
\end{proof}

\subsection{Proof of Theorem~\ref{thm:centralized}}

Fix $t$ and $\bx_t$; by footnote 1 and Corollary 1, every coordinate is feasible and we may restrict to maximization in (\ref{eq:Bellman}) to $\mathbb R_+^N \times \mathcal C_t$. 

\emph{Step 1: the two linear-price problems.} Let
$\bp_t^\pm:=(p_{i,t}^\pm)_i$ with $p_{i,t}^\pm=\partial U_{i,t}^{-1}(\pi^\pm)$, and set
$v^+:=(\bp_t^+,\bm m_t)$, $v^-:=(\bp_t^-,\bm M_t)$. We claim
\begin{equation}\label{eq:linmax}
	v^+\in\arg\max_{\mathbb R_+^N\times\mathcal C_t} J_t^{+},\qquad v^-\in\arg\max_{\mathbb R_+^N\times\mathcal C_t}J_t^{-}.
\end{equation}
Indeed, $J_t^{+}$ separates as $\sum_i\big[U_{i,t}(p_i)-\pi^+p_i\big]+\big[\Phi_t(x_t,\bc)-\pi^+\mathbf 1^T\bc\big]+\pi^+r_t$. Each $p_i$-term is concave and maximized at $p_{i,t}^+$ by the first-order condition $\partial U_{i,t}(p_i)=\pi^+$, which has the interior solution $p_{i,t}^+>0$ by concavity and positivity of $\partial U_{i,t}$. The bracketed $\bc$-term is maximized at $\bm m_t$ over $\mathcal C_t$ by the inequality \eqref{eq:19}. The argument for $J_t^-$ is identical, using the inequality  \eqref{eq:18} to place the maximizer at $\bm M_t$. Note that $y(v^+)=\Delta_t^+$ and $y(v^-)=\Delta_t^-$, and that $\Delta_t^+\le\Delta_t^-$ because $p_{i,t}^+\le p_{i,t}^-$ (concavity of $U_{i,t}$ and $\pi^+>\pi^-$) and $m_{i,t}\le M_{i,t}$. \\

\emph{Step 2: net-consuming zone, $r_t\le\Delta_t^+$.} Since $J_t=\min\{J_t^+,J_t^-\}\le J_t^+$ pointwise,
\eqref{eq:linmax} gives $V_t(\bx_t)\le J_t^{+}(v^+)$. At $v^+$ the net consumption is $y(v^+)-r_t=\Delta_t^+-r_t\ge0$,
so $J_t(v^+)=J_t^{+}(v^+)$ by \eqref{eq:NEM_payment}. The upper bound is attained and $v^+$ is optimal. \\

\emph{Step 3: net-producing zone, $r_t>\Delta_t^-$.} Symmetrically, $J_t\le J_t^-$ pointwise gives $V_t(\bx_t)\le J_t^{-}(v^-)$, while $y(v^-)-r_t=\Delta_t^--r_t<0$ implies $J_t(v^-)=J_t^{-}(v^-)$  by \eqref{eq:NEM_payment}. Hence $v^-$ is optimal. \\

\emph{Step 4: net-zero zone, $\Delta_t^+<r_t\le\Delta_t^-$.} Split the feasible set into the convex pieces $A^{+}:=\{(\bp,\bc):y(\bp,\bc)\ge r_t\}$ and $A^{-}:=\{(\bp,\bc):y(\bp,\bc)\le r_t\}$, on which \eqref{eq:NEM_payment} gives $J_t=J_t^{+}$ and $J_t=J_t^{-}$, respectively. Let $\hat v\in A^{+}$ maximize $J_t^{+}$ over $A^+$ and suppose $y(\hat v)>r_t$. Since $y(v^+)=\Delta_t^+<r_t<y(\hat v)$ and $y$ is affine, the segment $[\hat v,v^+]$ contains a point $\tilde v$ with $y(\tilde v)=r_t$, and concavity of $J_t^{+}$ together with $J_t^{+}(v^+)\ge J_t^{+}(\hat v)$ gives $J_t^{+}(\tilde v)\ge J_t^{+}(\hat v)$. Hence $J_t^+$ attains its maximum over $A^{+}$ on the face $\{y=r_t\}$. 

The same argument on $A^{-}$, using $y(v^-)=\Delta_t^-\ge r_t$, places a maximizer of $J_t^-$ over $A^-$ on the same face. As $A^+\cup A^-$ is the whole feasible set and $J_t$ agrees with $J_t^{+}$ and $J_t^{-}$ on the
respective pieces, some maximizer $\rho_t(\bx_t)=(\bp_t^\rho,\bc_t^\rho)$ of $J_t$ satisfies
$\sum_i(p_{i,t}^\rho+c_{i,t}^\rho)=y(\rho_t(\bx_t))=r_t$, i.e.\ the community is net-zero.

Steps 2--4 hold for every $t$ and $\bx_t$, which proves the theorem. 

\subsection{Proof of Proposition~\ref{prop:response}}
\emph{Step 1 (TCL).} Problem (\ref{eq:tcl-opt}) is a static concave program, so its solution is characterized by the sign of the
one-sided derivatives of $U_{i,t}(p)-P_{\bm\psi_t}(p-r_{i,t})$. For $\bm\psi_t=\bm\pi^{\pm}$ the payment is linear and
the first-order condition $\partial U_{i,t}(p)=\pi^{\pm}$ gives $p_{i,t}^{\pm}$, interior by assumption. For
$\bm\psi_t=\bm\pi$ the derivative is $\partial U_{i,t}(p)-\pi^{+}$ on $p>r_{i,t}$ and $\partial U_{i,t}(p)-\pi^{-}$ on
$p<r_{i,t}$, hence negative for $p>p_{i,t}^{-}$ and positive for $p<p_{i,t}^{+}$; since $p_{i,t}^{+}\le p_{i,t}^{-}$ by
concavity of $U_{i,t}$, the maximizer is $\tilde p_{i,t}=\clip\big(r_{i,t},\,p_{i,t}^{+},\,p_{i,t}^{-}\big)$, i.e.\ (\ref{eq:tcl-nem}).

\emph{Step 2 (Reduction).} By the sequential convention of Sec.~II-C-1, the TCL is scheduled first and claims the
renewable with priority, so $\tilde p_{i,t}$ is a deterministic function of $(\bm\psi_t,r_{i,t})$ alone: it does not
depend on $c_{i,t}$ or on the EV state. Define the residual renewable $g_{i,t}:=r_{i,t}-p_{i,t}(\bm\psi_t)$, so that $z_{i,t}=p_{i,t}(\bm\psi_t)+c_{i,t}-r_{i,t}=c_{i,t}-g_{i,t}$.

The price process is exogenous to $i$; hence, on a fixed realization of $\{r_{i,k},\bm\psi_k\}$, the sequence
$\{g_{i,k}\}$ is determined and unaffected by $\pmb \mu_i$, and $\sum_k U_{i,k}(p_{i,k}(\bm\psi_k))$ is a constant. Surplus
maximization (6) therefore reduces to the cost minimization
\begin{equation}\label{eq:red}
	\min_{\mu_i}\ \sum_{k}P_{\bm\psi_k}\big(c_{i,k}-g_{i,k}\big)+\mathbbm{1} (\tau_{i,k}=1)\,q(d_{i,k}-c_{i,k}),
\end{equation}
which is the price-inelastic problem with net renewable $g_{i,k}$. Every $P_{\bm\psi_k}$ is convex piecewise linear with
slopes in $[\pi^-,\pi^+]$, so Lemma \ref{lemma1} applies and
\begin{equation}\label{eq:slope}
	\pi^{-}\epsilon\ \le\ P_{\bm\psi}(z+\epsilon)-P_{\bm\psi}(z)\ \le\ \pi^{+}\epsilon,\qquad \forall z\in\R,\ \epsilon>0.
\end{equation}

\emph{Step 3 (EV, interchange).} By Lemma \ref{lemma1}, restrict to completing sequences, with realized cost
$J(\bm c_i)=\sum_k P_{\bm\psi_k}(c_{i,k}-g_{i,k})$. Path-wise optimality implies optimality in expectation over adapted
policies, and successive jobs decouple, so it suffices to modify any completing $\bm c_i$ so that its $t$-th entry
equals the asserted action, without increasing $J$; repeating at $t+1,t+2,\dots$ gives the claim. Each modification
moves $\epsilon>0$ between $t$ and the earliest feasible interval of the same job window
$\{t+1,\dots,t+\tau_{i,t}-1\}$, which is adapted and preserves completion.

\emph{Case 1} ($\bm\psi_t=\bm\pi^{+}$). Completion forces $c_{i,t}\ge m_{i,t}$. If $c_{i,t}>m_{i,t}$, move $\epsilon$ to
the future: $P_{\bm\pi^{+}}$ has slope $\pi^{+}$ everywhere, so the saving at $t$ is exactly $\pi^{+}\epsilon$ and the
future increase is at most $\pi^{+}\epsilon$ by \eqref{eq:slope}.

\emph{Case 2} ($\bm\psi_t=\bm\pi^{-}$). If $c_{i,t}<M_{i,t}$, completion implies a positive future charge; borrowing
$\epsilon$ from it costs exactly $\pi^{-}\epsilon$ at $t$ and saves at least $\pi^{-}\epsilon$ later.

\emph{Case 3} ($\bm\psi_t=\bm\pi$). Here $\tilde c_{i,t}=\clip\big(g_{i,t},\,m_{i,t},\,M_{i,t}\big)$ by \eqref{eq:ev-nem}. If $c_{i,t}>\tilde c_{i,t}$, then $c_{i,t}>\max\{g_{i,t},m_{i,t}\}$, so $z_{i,t}>0$ and the local slope is $\pi^{+}$: lower $c_{i,t}$ as in Case 1. If $c_{i,t}<\tilde c_{i,t}$, then $c_{i,t}<\min\{g_{i,t},M_{i,t}\}$, so $z_{i,t}<0$ and the local slope is $\pi^{-}$: raise $c_{i,t}$ as in Case 2. Either case preserves the sign of $z_{i,t}$ until the target is reached, so no step increases $J$.

Hence every completing sequence is dominated on every realization by the asserted action, which depends only on $\bm\psi_t$ and the current state.

\end{document}